\documentclass[journal]{IEEEtran}
\usepackage[cmex10]{amsmath}
\usepackage{graphicx}
\usepackage{epstopdf}
\usepackage{cases}
\usepackage[tight,footnotesize]{subfigure}
\usepackage{amsthm} 
\usepackage{cite}
\usepackage{amssymb}
\usepackage{algorithm}
\usepackage{algorithmic}
\usepackage{multirow}
\usepackage{amsmath}
\usepackage{xcolor}
\usepackage{CJK}
\usepackage{subeqnarray}
\usepackage{cases}
\usepackage{enumerate}
\usepackage{booktabs}
\usepackage{stfloats}


\newtheorem{theorem}{\hskip\parindent\bf{Theorem}}

\ifCLASSINFOpdf
\else
\fi

\hyphenation{op-tical net-works semi-conduc-tor}

\begin{document}

\title{Maximizing Energy Charging for UAV-assisted MEC Systems with SWIPT}

\author{Xiaoyan Hu, \IEEEmembership{Member,~IEEE,} Pengle Wen, \IEEEmembership{Student Member,~IEEE,} \\
Han Xiao, \IEEEmembership{Student Member,~IEEE,} 
Wenjie Wang, \IEEEmembership{Member,~IEEE,} Kai-Kit Wong, \IEEEmembership{Fellow,~IEEE}

\thanks{X. Hu, P. Wen, H. Xiao, and W. Wang, are with the School of Information and Communications Engineering, Xi'an Jiaotong University, Xi'an 710049, China. (email: xiaoyanhu@xjtu.edu.cn, pengle\_wen@stu.xjtu.edu.cn, hanxiaonuli@stu.xjtu.edu.cn,  wjwang@mail.xjtu.edu.cn, zhousu@xjtu.edu.cn).}
	\thanks{K.-K. Wong is with the Department of Electronic and Electrical Engineering, University College London, London WC1E 7JE, U.K. (email: kai-kit.wong@ucl.ac.uk)}
}
\maketitle

\begin{abstract}
A Unmanned aerial vehicle (UAV)-assisted  mobile edge computing (MEC) scheme with simultaneous wireless information and power transfer (SWIPT) is proposed in this paper. Unlike existing MEC-WPT schemes that disregard the downlink period for returning computing results to the ground equipment (GEs), our proposed scheme actively considers and capitalizes on this period. By leveraging the SWIPT technique, the UAV can simultaneously transmit energy and the computing results during the downlink period. In this scheme, our objective is to maximize the remaining energy among all GEs by jointly optimizing computing task scheduling, UAV transmit and receive beamforming, BS receive beamforming, GEs' transmit power and power splitting ratio for information decoding, time scheduling, and UAV trajectory. We propose an alternating optimization algorithm that utilizes the semidefinite relaxation (SDR), singular value decomposition (SVD), and fractional programming (FP) methods to effectively solve the nonconvex problem. Numerous experiments validate the effectiveness of the proposed scheme.
\end{abstract}
\begin{IEEEkeywords}
~Mobile edge computing  (MEC), simultaneous wireless information and power transfer (SWIPT), unmanned aerial vehicle (UAV).
\end{IEEEkeywords}
\IEEEpeerreviewmaketitle

\vspace{-3mm}
\section{Introduction}\label{sec:Introduction}
The technology of mobile edge computing (MEC) enables users to offload computing tasks to the nearby edge servers for processing, which significantly reduces the computing latency and the energy consumption of the user devices. The practical applications and future development trends of MEC have been extensively studied in \cite{hu2015mobile}.
In general, edge computing servers are fixed on the ground in the traditional MEC systems, potentially resulting in limited service coverage.
Integrating unmanned aerial vehicles (UAVs) with MEC can overcome these limitations, enhancing coverage and improving the efficiency of the MEC system due to their impressive mobility and flexibility. Specifically, in \cite{hu2019uav}, the authors explored a framework for MEC supported by a UAV, where the UAV can act as a computing server to assist ground equipment (GE) in processing computing tasks and serve as a relay to further offload GEs' computation tasks to the base station (BS).

While the MEC technology is capable of effectively processing GEs' computation tasks remotely, it cannot work well in scenarios where the GEs's battery power is insufficient and demand additional energy to sustain normal operations including task offloading.
Hence, leveraging the wireless charging technology into the MEC systems can help address this energy-insufficiency problem \cite{8805125,9312671,hu2018wireless}.
In \cite{8805125}, a UAV-enabled MEC system is explored, where the UAV initially charges the GEs using wireless power transfer (WPT), and then each GE sends its tasks to the UAV for processing. The maximization of the computation energy efficiency for a non-orthogonal multiple access (NOMA)-based WPT-MEC networks is  investigated in \cite{9312671}. Additionally, the authors in \cite{hu2018wireless} examine the minimization of the total transmit energy for BS in WPT-MEC networks.
However, most existing works do not consider the downlink period for returning the calculation results to GEs, which does not align with the practical situations.
In fact, the downlink period should also be taken into consideration, and we can capitalize on the  simultaneous wireless information and power transfer (SWIPT) technology to transmit energy and results simultaneously during this period.
This not only aligns the WPT-MEC systems more closely with the real scenarios but also boosts the overall efficiency of the systems.

Motivated by the above analysis, we establish an optimization problem for a UAV-assisted MEC-SWIPT network considering both the uplink and downlink periods. It aims at maximizing the minimum remaining energy among GEs by jointly designing the computing tasks scheduling, transmit and receive beamforming of the UAV, receive beamforming of the BS, transmit and receive power splitting ratio of the GEs, time scheduling, and UAV trajectory. An alternating optimization algorithm based on the semidefinite relaxation (SDR), singular value decomposition (SVD) and fractional programming (FP) techniques is proposed to solve this problem. The developed scheme closely emulates the UAV-assisted MEC system in real-world scenarios and maximizes the utilization of resources through the incorporation of SWIPT technology.

\vspace{-1mm}
\section{System Model and Problem Formulation}\label{sec:system}
\vspace{-2mm}
\begin{figure}[ht]
  \centering
  \includegraphics[width=0.63\linewidth]{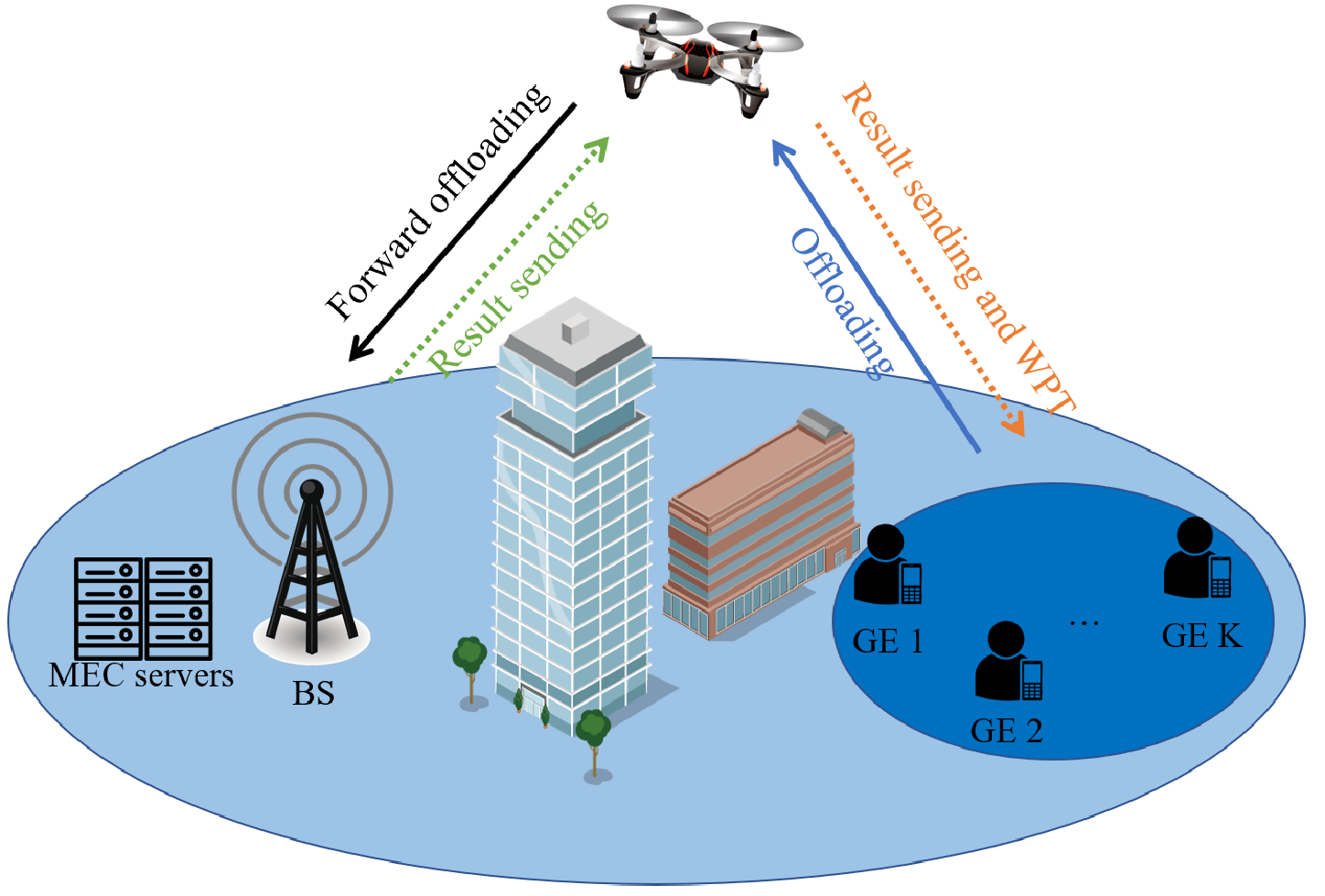}
  \vspace{-1mm}
  \caption{The model of UAV-assisted MEC-SWIPT system.}
  \label{Fig:system1}
\end{figure}
As depicted in Fig. \ref{Fig:system1}, we consider a UAV-assisted MEC network with SWIPT, which consists of a base station (BS) co-located with a MEC server, a UAV, and $K$ GEs denoted as $\mathcal{K}=\left\{1,..., K \right\}$.
Each GE has a computation-intensive task that is bit-wise independent and requires an electrical supply to maintain normal operations. We assume that the direct links between GEs and the BS are blocked by buildings.
The assistant UAV, equipped with $L$ antennas, acts as a relay to send GEs' offloaded tasks to the BS for processing during the uplink period.
Additionally, the SWIPT technology is leveraged at the UAV to transmit the calculation results to GEs and engage in wireless charging simultaneously during the downlink period.
The BS is equipped with a uniform rectangular array of $M=M_{\text{x}}M_{\text{y}}$ antennas, respectively with $M_{\text{x}}$ and $M_{\text{y}}$ elements along the x-direction and y-direction.  

The system is modeled in a three-dimensional (3D) Euclidean coordinate system for all nodes. We divide the time of flight $T$ into $N$ time slots, each slot with the length of $\delta=\frac{T}{N}$, where $\delta$ is sufficiently small such that the UAV's location can be assumed to be unchanged during each slot. Let $\mathcal{N}=\left\{1,\cdots,N \right\}$ denote the set of $N$ time slots. The BS and GE $k\in\mathcal{K}$ are located horizontally at ${\mathbf{s}_\mathrm{b}}= (x_{\text{b}},y_{\text{b}})$ and ${\mathbf{s}_k}= (x_k,y_k)$, with zero vertical coordinates. The UAV is assumed to fly at a fixed altitude $H > 0$ and its horizontal locations at the $n$-th time slot are denoted as $\mathbf{q}\left[ n \right] = (x_{\text{u}}[n],y_{\text{u}}[n])$. The initial and final horizontal locations of the UAV are set as $\mathbf{q_{\text{I}}}=(x_{\text{I}},y_{\text{I}})$ and $\mathbf{q_{\text{F}}}=(x_{\text{F}},y_{\text{F}})$, respectively, and thus the maximum flight speed of the UAV is assumed to be $V_{\text{max}}$. The UAV must satisfy the following mobility constraints
\vspace{-1mm}
\begin{align}
&\lVert \mathbf{q}\left[ n+1 \right] -\mathbf{q}\left[ n \right] \rVert \le \delta V_{\text{max}}, ~{{\forall}n=1,\cdots,N-1},\label{u_move1}\\
&\mathbf{q}\left[ 1 \right]=\mathbf{q}_{\text{I}},\mathbf{q}\left[ N \right]=\mathbf{q}_{\text{F}}. \label{u_move2}
\end{align}

Similar to \cite{xu2022computation}, we adopt the Rician channel to model the GE-UAV links and the UAV-BS link. Therefore, we have
\begin{align}
&\hspace{-2mm}\mathbf{h}_{i}[n]=\sqrt{\frac{\beta _0}{d^2_{i}[n]}}\Big(\sqrt{\frac{\zeta}{1+\zeta}}\mathbf{{h}}^{\rm{LoS}}_{i}[n]+\sqrt{\frac{1}{1+\zeta}}\mathbf{{h}}^{\rm{NLoS}}_{i}[n]\Big),
\end{align}
where $i\in \left\{ \left\{ k,\text{u} \right\} ,\left\{\text{u,b}  \right\} \right\}$ indicates the subscripts of the GE $k$-UAV and the UAV-BS links, $\beta_0$ is the average channel power gain at a reference distance of 1 meter (m), $\zeta$ denotes the Rician factor. Besides, $d_{k,\text{u}}[n]=\sqrt{\lVert \left. \mathbf{q}\left[ n \right] -\mathbf{s}_k \rVert ^2+H^2\right.}$ and $d_{\text{u,b}}[n]=\sqrt{\lVert \left. \mathbf{q}\left[ n \right] -\mathbf{s}_\mathrm{b} \rVert ^2+H^2\right.}$ are the distances from  GE $k$  to the UAV and from the UAV to the BS, respectively.

For the Line of Sight (LoS) component, we have
$\mathbf{{h}}_{k,\text{u}}^{\text{LoS}}[n] = \Big[1, e^{-j\frac{2\pi}{\lambda}d\varPhi_{k,\text{u}}[n]}, \ldots, e^{-j\frac{2\pi}{\lambda}d(L-1)\varPhi_{k,\text{u}}[n]}\Big]^T\in\mathbb{C}^{L\times1}$, where $\lambda$  represents the carrier wavelength, $d$ is the distance between antennas, and $\varPhi_{k,\text{u}}[n] = \frac{x_{\text{u}}[n] - x_k}{\sqrt{\lVert \mathbf{q}[n] - \mathbf{s}_k \rVert^2 + H^2}}$ indicates the cosine of the angle of arrival (AoA) for the signal from GE $k$ to the UAV.
In addition\footnote{{We use the capital letter $\mathbf{{H}}_{\text{u,b}}$  to represent the UAV-BS channel  considering the fact that it is a matrix instead of a vector.}}, {$\mathbf{{H}}_{\text{u,b}}^{\text{LoS}}[n] = \boldsymbol{\phi}_{\text{b,r}}[n] \boldsymbol{\phi}_{\text{u,b}}^H[n]\in\mathbb{C}^{M_{\text{x}}M_{\text{y}}\times L}$}, where
$\boldsymbol{\phi}_{\text{u,b}}[n] = \left[1, e^{-j\frac{2\pi}{\lambda}d\varphi_\text{{ub}}[n]}, \ldots, e^{-j\frac{2\pi}{\lambda}d(L-1)\varphi_\text{{ub}}[n]}\right]^T\in\mathbb{C}^{L\times1}$ denotes the array response {with respect to (w.r.t.)} the angle of departure (AoD) for the signal from the UAV to the BS with $\varphi_\text{{ub}}[n] = \frac{x_{\text{b}} - x_{\text{u}}[n]}{\sqrt{\lVert \mathbf{q}[n] - \mathbf{s}_{\text{b}} \rVert^2 + H^2}}$ being the cosine of the AoD, and
$\boldsymbol{\phi}_{\text{b,r}}[n] = \left[1, e^{-j\frac{2\pi}{\lambda}d\varphi_\text{{br,x}}[n]}, \ldots, e^{-j\frac{2\pi}{\lambda}d(M_{\text{x}}-1)\varphi_\text{{br,x}}[n]}\right]^T \otimes
\left[1, e^{-j\frac{2\pi}{\lambda}d\varphi_\text{{br,y}}[n]}, \ldots, e^{-j\frac{2\pi}{\lambda}d(M_{\text{y}}-1)\varphi_\text{{br,y}}[n]}\right]^T\in\mathbb{C}^{M_{\text{x}}M_{\text{y}}\times1}$ indicates the receive array response at the BS,
with $\varphi_\text{{br,x}}[n] = \sin \varpi[n] \sin \varTheta[n]$ and $\varphi_\text{{br,y}}[n] = \sin \varpi[n] \cos \varTheta[n]$ respectively denoting the vertical and horizontal AoAs of the signals from the UAV to the BS.
Here we have $\sin \varpi[n] = \frac{H}{\sqrt{\lVert \mathbf{q}[n] - \mathbf{s}_{\text{b}} \rVert^2 + H^2}}$, $\sin \varTheta[n] = \frac{x_{\text{b}} - x_{\text{u}}[n]}{\sqrt{\lVert \mathbf{q}[n] - \mathbf{s}_{\text{b}} \rVert^2}}$, and $\cos \varTheta[n] = \frac{y_{\text{b}} - y_{\text{u}}[n]}{\sqrt{\lVert \mathbf{q}[n] - \mathbf{s}_{\text{b}} \rVert^2}}$.

Without loss of generality, we assume that the Non-LoS (NLoS) components $\mathbf{{h}}_{k,\text{u}}^{\text{NLoS}}[n] \in \mathbb{C}^{L\times 1}$ and ${\mathbf{{H}}_{\text{u,b}}^{\text{NLoS}}[n]} \in \mathbb{C}^{M\times L}$  follow the complex normal distributions of  $\mathcal{CN}(\mathbf{0,I}_L)$ and $\mathcal{CN}(\mathbf{0,I}_{M\times L})$, respectively.
It is  assumed that the channel reciprocity holds for all the uplink and downlink channels considered in this paper.
For simplicity of expression, we define ${\overline{\mathbf{H}}_{{\text{u,b}}}\left[ n \right] =\mathbf{H}_{\text{u,b}}}\left[ n \right]d_{\text{u,b}}\left[ n \right]$ and $\overline{\mathbf{h}}_{k,\text{u}}\left[ n \right] =\mathbf{h}_{k,\text{u}}\left[ n \right]d_{k,\text{u}}\left[ n \right]$.  

Hence, the signal-to-interference-plus-noise ratio (SINR) of GE $k$'s signal recovered at the UAV in time slot $n$ for $k\in\mathcal{K}$ and $n\in\mathcal{N}$ can be expressed as
\begin{align}
    &r_k\left[ n \right] =\frac{\frac{E_k[n]}{t_{\text{o}}[n]}\left|\mathbf{v}^{\text{H}}_{k}\left[ n \right] \mathbf{h}_{k,\text{u}}\left[ n \right]   \right|^2}{\sum_{j\ne k}{\frac{E_j[n]}{t_{\text{o}}[n]}\left|\mathbf{v}^{\text{H}}_{k}\left[ n \right] \mathbf{h}_{j,\text{u}}\left[ n \right]   \right|^2}+\ \lVert \mathbf{v}_{k}\left[ \text{n} \right] \rVert ^2
\sigma ^2},
\end{align}
where $\mathbf{v}_k[n]\in\mathbb{C}^{L\times 1}$ represents the receive beamforming at the UAV for GE $k$, while $E_k[n]$ and $t_{\text{o}}[n]$ respectively denote the transmit energy consumption of GE $k$ and the allocated time for the uplink offloading period at time slot $n$. Additionally, $\sigma^2$ indicates the noise power at the receiver.

The transmission rate of the UAV for uplink task offloading to the BS  at time slot $n$ can be given by
\begin{align}
    &R_{\text{\text{o,u}}}\left[ {n} \right]=B\log\text{det}\left(\mathbf{I}_M+\mathbf{\varTheta}_{\text{\text{o,u}}}[n]\right),
\end{align}
where $\mathbf{\varTheta}_{\text{\text{o,u}}}[n]={\mathbf{U}_{\text{BS}}^{{H}}[n] \mathbf{H}_{\text{\text{u,b}}}[n] \mathbf{U}_{\text{UAV}}[n] \mathbf{U}_{\text{UAV}}^{H}[n] {\mathbf{H}^H_{{\text{u,b}}}[n]} \mathbf{U}_{\text{BS}}[n]}\\\left({\sigma^2\mathbf{U}_{\text{BS}}^{{H}}[n]\mathbf{U}_{\text{BS}}[n]}\right)^{-1}$, with $\mathbf{U}_{\text{UAV}}[n]\in\mathbb{C}^{L\times L}$ being the transmit beamforming matrix generated by the UAV and  $\mathbf{U}_{\text{BS}}[n]\in\mathbb{C}^{M\times M}$ denoting the receive beamforming matrix generated by the BS at the $n$-th time slot.

Considering the downlink period for transmitting the communication results from the BS to GEs via UAV, we assume each GE applies power splitting (PS) protocol to coordinate
the processes of information decoding and energy harvesting from the received signal relayed by the UAV \cite{shi2014joint}.
The received signal at GE $k$ is split to the information decoder (ID) and the energy harvester (EH) by a power splitter. Define $\rho_k[n]$ as the portion of the signal power to the ID, while the remaining portion of power to the EH. Therefore, the SINR of and harvested energy of GE $k$ at time slot $n$ are given by
\vspace{-2mm}
\begin{align}
&\hspace{-2mm}r_{\text{u},k}[n]=\frac{\rho _k\left[ n \right]   \left| \mathbf{{h}}_{k,\text{u}}^H[n] \mathbf{w}_k\left[ n \right] \right|^2}{\rho _k\left[ n \right]\Big(\sum_{j\ne k}{\left| \mathbf{{h}}^H_{k,\text{u}}[n] \mathbf{w}_j\left[ n \right] \right|^2}+\sigma _{k}^{2}\Big) +\delta _{k}^{2} },\\
&\hspace{-2mm}{E^{\text{har}}_k[n]=t_{\text{\text{d}}}}\zeta _k\left( 1-\rho _k\left[ n \right] \right) \Big( \sum_{j=1}^K{\left| \mathbf{h}_{k,\text{u}}^{H}\left[ n \right] \mathbf{w}_j\left[ n \right] \right|^2+}\sigma _{k}^{2} \Big),
\end{align}
where $\mathbf{w}_k\left[ n \right]\in\mathbb{C}^{L\times1}$ denotes the transmit beamforming of the UAV for GE $k$ and {$t_{\text{\text{d}}}\in[0,\delta]$ indicates the predetermined time for the downloading period in each time slot}. $\sigma _{k}^{2}$ is the noise power at GE $k$, while $\delta _{k}^{2}$ represents the additional noise power introduced by the ID at GE $k$. Besides, $0<\zeta _k \leq 1$ denotes the energy conversion efficiency at the EH of GE $k$.

Let $L_{\text{c},k}\left[ n \right]$ and $L_{\text{o},k}\left[ n \right]$ respectively represent the  local computing and the offloaded task bits at time slot $n$. We assume that each GE has a specific computing task  bits to be handled in each time slot, denoted as $\varGamma$. Thus, we have the following task requirement constraints:
\begin{align}
    &{L_{\text{c},k}\left[ n \right] +L_{\text{o},k}\left[ n \right] \ge \varGamma},~{\forall}k, {\forall}n. \label{causal1}
\end{align}

Denote the maximum CPU frequency of GE $k$ as $F^{\text{max}}_k$, then we have the following local computing resource constraints:
\begin{align}
    &L_{\text{c},k}[n]\le {\delta F_{k}^{\rm{max}}}/{C_k},~{\forall}k,~{\forall}n,\label{causal2}
\end{align}
where ${C_k}$ is the number of required CPU cycles for computing one task bit at GE $k$.
Based on \cite{hu2019uav}, the energy consumption of GE $k$ for local computing can be expressed as
\begin{align}
    &E_k^{\text{comp}}[n]={L_{\text{c},k}^{3}\left[ n \right] C_{k}^{3}\varsigma _k}/{\delta ^2},~{\forall}k,~{\forall}n,
\end{align}
where $\varsigma _k$ is the effective capacitance coefficient of GE $k$.

Let $L_{\text{o,u}}\left[ n \right]$ denote the task bits that the UAV further offload to the BS for processing at time slot $n$. 
In this paper, we assume that the computing time at the BS and the transmission time from the BS to the UAV are negligible. We have the following causal constraints for the offloading process:
\begin{align}
    &L_{\text{o},k}[n]\le t_{\text{o}}[n]B\log _2\left( 1+r_k\left[ n \right] \right),~{\forall}k, {\forall}n,\label{causal3}\\
    &L_{\text{o,u}}\left[ n \right] \le {t_{\text{\text{d}}}}R_{{\text{o,u}}}\left[ {n} \right],~{\forall}n,\label{causal4}\\
    &{\sum_{k=1}^K L_{\text{o},k}[n]\le L_{\text{o,u}}[n]},~{\forall}n, \label{causal5}\\
    &\theta L_{\text{o},k}[n]\leq {t_{\text{\text{d}}}}B\log_2(1+r_{\text{u},k}[n]),~{\forall}k, {\forall}n,\label{causal6}
\end{align}
where $\theta$ represents the uniform ratio of the calculation results to the computation tasks.

We introduce an auxiliary variable $\eta$ to denote the minimum remaining energy among all GEs as  shown in constraint \eqref {eta}. Hence, the problem for maximizing $\eta$ can be formulated as
\begin{subequations}
\begin{align}
\textbf{(P1)}~\mathop{\text {max}}\limits_{\mathbf{\Psi}}~&\eta\\
\text{s.t.}~~&(\ref{u_move1}),(\ref{u_move2}),(\ref{causal1}),(\ref{causal2}),(\ref{causal3})-(\ref{causal6}),\\
&\eta \le\sum_{n=1}^NE^{\text{\text{\text{\text{har}}}}}_k[n]- E_k^{\text{comp}}[n] - E_k[n],~{\forall}k,\label{eta}\\
&{t_{\text{o}}[n]+t_{\text{u}}[n] \le \delta}-t_{\text{\text{d}}}, {\forall}n, \label{time}\\
&\text{tr}(\mathbf{U}_{\text{UAV}}\mathbf{U}^H_{\text{UAV}})\le P^{\max}_{\text{UAV}},~{\forall}n, \label{pu2}\\
&\sum_{j=1}^K{\left| \mathbf{w}_j^{H}\left[ n \right] \mathbf{w}_j\left[ n \right] \right|^2}\leq P^{\max}_{\text{UAV}},~{\forall}n,\label{pu1}\\
&0\le E_k[n] \le P_k^{\rm{max}} t_{\text{o}}[n],~{\forall}n,{\forall}k, \label{pk}\\
&0\le \rho _k\left[ n \right] \le 1\label{rho},~{\forall}k,{\forall}n.
\end{align}
where constraints in (\ref{time}) ensure that the time allocated for uplink and downlink periods does not exceed the duration of each time slot. Additionally, constraints (\ref{pu2}), (\ref{pu1}) represent the  power constraints of the UAV for uplink and downlink transmissions, while \eqref{pk} is offloading power constraint for GE $k$, where $ P^{\max}_{\text{UAV}}$ and $ P_k^{\rm{max}}$ are the maximum transmitting power of the UAV and the GE $k$, respectively. In addition, $\mathbf{\Psi}=\left\{ \mathbf{v}_k\left[ n \right], \mathbf{U}_{\text{UAV}}\left[ n \right], \mathbf{U}_{\text{BS}}\left[ n \right], \eta, t_{\text{o}}\left[ n \right], t_{\text{u}}\left[ n \right], E_k\left[ n \right], L_{\text{c,}k}\left[ n \right], \right.\\ \left.L_{\text{o,}k}\left[ n \right], \rho _k\left[ n \right], \mathbf{w}_k\left[ n \right], \mathbf{q}\left[ n \right] \right\}_{k\in\mathcal{K},n\in\mathcal{N}}
$ denotes the compact set of the optimization variables.
\end{subequations}

\vspace{-2mm}
\section{OPTIMIZATION ALGORITHM DESIGN}
In this section, we propose an alternating optimization algorithm to solve the  problem (P1). We divide the optimization variables into four blocks, i.e., {the uplink-period beamforming design set $\mathbf{\Psi}_1= \left\{ \mathbf{v}_{k}\left[ n \right], \mathbf{U}_{\text{UAV}}\left[ n \right], \mathbf{U}_{\text{BS}}\left[ n \right] \right\}$},
the resource allocation set $\mathbf{\Psi}_2= \left\{ \eta,t_{\text{o}}[n], t_{\text{u}}[n], E_{k}[n], L_{\text{c},k}[n], L_{\text{o},k}[n] \right\}$, {the downlink-period beamforming and GEs' PS design set $\mathbf{\Psi}_3= \left\{\eta, \rho_k[n], \mathbf{w}_k\left[ n \right] \right\}$}
and UAV trajectory design set $\mathbf{\Psi}_4= \left\{\eta, \mathbf{q}[n] \right\}$. Therefore, we decompose (P1) into the following four subproblems, which are analyzed and solved as follows.
\subsubsection{{Subproblem for Optimizing the Uplink-Period Beamforming Design Set $\mathbf{\Psi}_1$}}
We employ the zero-forcing (ZF) algorithm for obtaining $\mathbf{v}_{k}[n]$ and the Singular Value Decomposition (SVD)-based approach to analyze the transmission rate from the UAV to the BS. Based on \cite{shi2014joint}, \cite{goldsmith2005wireless}, we can derive the beamforming solutions as:
\begin{align}
    &\mathbf{v}_{k}[n]={\mathbf{\Upsilon}_k[n]\mathbf{\Upsilon}^H_k[n] \overline{\mathbf{h}}_{k,\text{u}}[n]}/{\lVert \mathbf{\Upsilon}_k[n]\mathbf{\Upsilon}^H_k[n]\overline{\mathbf{h}}_{k,\text{u}}\left[ n \right] \rVert},\\
    &\mathbf{U}_{\text{BS}}[n] = [\overline{\boldsymbol{\xi}}_1, \ldots, \overline{\boldsymbol{\xi}}_M],~
    \mathbf{U}_{\text{UAV}}[n] = [\widehat{\boldsymbol{\xi}}_1, \ldots, \widehat{\boldsymbol{\xi}}_L],
\end{align}
where $\mathbf{\Upsilon}_k[n]$ denotes the orthogonal basis for the null space of $\overline{\mathbf{H}}^H_{k,\text{u}}[n]=[\overline{\mathbf{h}}_{1,\text{u}}\left[ n \right],...,\overline{\mathbf{h}}_{k-1,\text{u}}\left[ n \right],\overline{\mathbf{h}}_{k+1,\text{u}}\left[ n \right],...,\overline{\mathbf{h}}_{K,\text{u}}\left[ n \right]]^H$. Also, $\overline{\boldsymbol{\xi}}_m\in\mathbb{R}^{M\times1}$ and $\widehat{\boldsymbol{\xi}}_l\in\mathbb{R}^{L\times1}$ are the normalized eigenvectors of the $m$-th and $l$-th eigenvalues corresponding to {$\overline{\mathbf{H}}_{\text{\text{u,b}}}\left[ n \right] \overline{\mathbf{H}}_{\text{\text{u,b}}}^{H}\left[ n \right]
$ and $\overline{\mathbf{H}}_{\text{\text{u,b}}}^{H}\left[ n \right] \overline{\mathbf{H}}_{\text{\text{u,b}}}\left[ n \right]$}, respectively.

Thus, the channel between the UAV and BS can be divided into several parallel sub-channels. The transmission rate from the UAV to the BS can be formulated as follows
\begin{align}
    &R_{ {\text{o,u}}}[n]=\sum_{i=1}^{{\tau[n]}}{B{\log _2}\bigg( 1+{\frac{\lambda _i{E_{\text{UAV}}^{i}\left[ n \right]}}{t_{\text{u}}[n] d_{\text{u,b}}^{2}[n] \sigma ^2} } \bigg)  },
\end{align}
{where $\tau[n]$ represents the rank of $\overline{\mathbf{H}}_{\text{\text{u,b}}}\left[ n \right]$, and $\lambda_i$ denotes the square of the $i$-th singular value of $\overline{\mathbf{H}}_{\text{\text{u,b}}}\left[ n \right]$.} In addition, ${E_{\text{UAV}}^{i}\left[ n \right]}$ signifies the transmit energy assigned by the UAV to the $i$-th sub-channel at the $n$-th time slot.

\subsubsection{Subproblem for Optimizing the Resource Allocation Set $\mathbf{\Psi}_2$}
To facilitate the subsequent analysis, we introduce a new variable $L^i_{\text{o,u}}[n]$, indicating the offloaded task bits from UAV to BS using the $i$-th sub-channel at time slot $n$. Additionally, we define a new optimization set for subproblem 2, denoted as
$\mathbf{\Psi}'_2 = \left\{ \mathbf{\Psi}_2, \{L^i_{\text{o,u}}[n], E^i_{\text{UAV}}[n]\}_{\forall i,n} \right\}$.
For any given variable sets $\mathbf{\Psi}_1$, $\mathbf{\Psi}_3$ and $\mathbf{\Psi}_4$, the corresponding subproblem can be expressed as follows:
\begin{subequations}
\begin{align}
&~~\textbf{(P2)}~~~~\mathop{\text {max}}\limits_{\mathbf{\Psi}'_2}~~\eta\\
&~~\text{s.t.}~~
(\ref{causal1}),(\ref{causal2}),(\ref{causal3}),(\ref{causal5}),(\ref{causal6}),(\ref{time}),(\ref{pk}),(\ref{eta}),\\
&~~L_{\text{o,u}}\left[ n \right] \le \sum_{i=1}^{\tau[n]}L^i_{\text{o,u}}[n],~{\forall}n,\\
&~~L^i_{\text{o,u}}[n]\le{t_{\text{u}}[n]B{\log _2}\bigg( 1+ {\frac{\lambda _i{E_{\text{UAV}}^{i}\left[ n \right]}}{t_{\text{u}}[n] d_{\text{u,b}}^{2}[n] \sigma ^2} } \bigg)},~{\forall}n, {\forall}i, \label{causal7}\\
&~~\sum_{i=1}^{\tau \left[ n \right]}{E^i_{\text{UAV}}[n] \le P^{\max}_{\text{UAV}}t_{\text{u}}[n]},~{\forall}n.
\end{align}
\end{subequations}

Since  $f(x,t)=t{\rm{{log}}}(1+x/t)$ is a joint concave function w.r.t. $x$ and $t$ for
case of {$x,t\ge0$} \cite{boyd2004convex}, then the constraints (\ref{causal3}), (\ref{causal6}), and (\ref{causal7}) are convex versus the variables in $\mathbf{\Psi}'_2$. Therefore, problem P2 is a standard convex problem that can be solved by the existing solvers, such as CVX.

\subsubsection{{Subproblem for Optimizing the Downlink-Period Beamforming and GEs' PS Design Set $\mathbf{\Psi}_3$}}
By defining $\mathbf{W}_k[n]=\mathbf{w}_k[n]\mathbf{w}_k^H[n]$, ${\mathbf{H}}_{k,\text{u}}[n]={\mathbf{h}}_{k,\text{u}}[n]{\mathbf{h}}^H_{k,\text{u}}[n]$, and introducing an auxiliary variable ${\widetilde{\rho }_k\left[ n \right]}$, which satisfy $e^{\widetilde{\rho }_k\left[ n \right]} = {\rho }_k\left[ n \right]$. Hence, the constraints (\ref{causal6}),  (\ref{eta}) and (\ref{pu1}) can be respectively re-expressed as follows:
\begin{align}
    &\text{tr}\left( \mathbf{H}_{k,\text{u}}\left[ n \right] \mathbf{W}_k\left[ n \right] \right) \ge \big( 2^{\frac{\theta L_{\text{o},k}\left[ n \right]}{ t_{\text{\text{d}}}B } } -1 \big) \times   \nonumber\\
    &~~\bigg( \sum_{j\ne k}{\text{tr}\left( \mathbf{H}_{k,\text{u}}\left[ n \right] \mathbf{W}_j\left[ n \right] \right) +}\left( \delta _{k}^{2}+\sigma ^2 \right) e^{-\widetilde{\rho }_k\left[ n \right]} \bigg),~\forall n, \label{SDR_L}\\
    &\sum_{n=1}^N{t_{\text{\text{d}}}\zeta _k\big( 1-e^{\widetilde{\rho }_k\left[ n \right]}\big) \bigg( \sum_{j=1}^K{\text{tr}\left( \mathbf{H}_{k,\text{u}}\left[ n \right] \mathbf{W}_j\left[ n \right] \right)}+\sigma _{k}^{2} \bigg)}   \nonumber\\
    &~~-E_{k}^{\text{total}}\left[ n \right]\ge \eta,~\forall k, \label{slack}\\
    &\sum_{j=1}^K{\text{tr}\left( \mathbf{W}_{j}\left[ n \right] \right)}\leq {P_{\mathrm{UAV}}^{\max}},~\forall n, \label{SDR_pu}
\end{align}
where $E_{k}^{\text{total}}\left[ n \right] = E_k^{\text{comp}}[n] + E_k[n]$ denotes the total energy consumption of GE $k$ for computing and offloading  at time slot $n$.
Furthermore, we introduce a slack variable $\varOmega _k\left[ n \right]$ to deal with the coupling relationship between $e^{\widetilde{\rho }_k\left[ n \right]}$ and $ \sum_{j=1}^K{\text{tr}\left( \mathbf{H}_{k,\text{u}}\left[ n \right] \mathbf{W}_j\left[ n \right] \right)}+\sigma _{k}^{2} $. Therefore, the constraint (\ref{slack}) can be further re-expressed as the  form in \eqref{SDR,nonconvex}-\eqref{SDR,slack}:
 \begin{align}
     &\sum_{n=1}^N{t_{\text{\text{d}}}\zeta _k\big( e^{\varOmega _k\left[ n \right]}-e^{\widetilde{\rho }_k\left[ n \right]  +\varOmega _k\left[ n \right]} \big) -E^{\text{total}}_k\left[ n \right]}\ge \eta,~\forall k, \label{SDR,nonconvex}\\
    &e^{\varOmega _k\left[ n \right]}\le \big( \sum_{j=1}^K{\text{tr}\left( \mathbf{H}_{k,\text{u}}\left[ n \right] \mathbf{W}_j\left[ n \right] \right)}+\sigma _{k}^{2} \big),~\forall k, \forall n, \label{SDR,slack}
 \end{align}

Hence, for any given variable sets $\mathbf{\Psi}_1$, $\mathbf{\Psi}_2$ and $\mathbf{\Psi}_4$, the subproblem for solving $\mathbf{\Psi}_3$ can be expressed as follows:
\begin{subequations}
\begin{align}
\textbf{(P3)}~&\mathop{\text {max}}\limits_{ \{\mathbf{W}_k[n],\widetilde{\rho }_k\left[ n \right] ,{\varOmega _k[n]}\}_{\forall k,n},\eta}~~\eta\\
&~~~~~\text{s.t.}~~(\ref{SDR_L}), (\ref{SDR_pu}), (\ref{SDR,nonconvex}), (\ref{SDR,slack}),\\
&~~~~~~~~~~0 \le e^{\widetilde{\rho }_k\left[ n \right]} \le 1,~\forall k, \forall n, \label{SDR,e}\\
&~~~~~~~~~~\mathbf{W}_{k}\left[ n \right]\succeq 0,~\forall k, \forall n, \label{SDR,semidefinite}\\
&~~~~~~~~~~\text{Rank}(\mathbf{W}_{{k}}\left[ n \right])=1 ,~\forall k, \forall n,\label{SDR,rank}
\end{align}
\end{subequations}
which is a non-convexity optimization  because of the constraints (\ref{SDR,nonconvex}) and (\ref{SDR,rank}). Fortunately, $e^{\varOmega _k\left[ n \right]}$ is a convex function with respect to $\varOmega _k\left[ n \right]$, and thus we can obtain its lower bound via its first-order Taylor expansion, which is given by
\begin{align}
&\xi _1\left( \varOmega _{k}\left[ n \right] \right) =e^{\varOmega _{k}^{\left( \text{m} \right)}\left[ n \right]}+e^{\varOmega _{k}^{\left( \text{m} \right)}\left[ n \right]}\big( \varOmega _{k}\left[ n \right] -\varOmega _{k}^{\left( \text{m} \right)}\left[ n \right] \big),
\end{align}
where $\varOmega _{k}^{\left( \text{m} \right)}\left[ n \right]$ is the feasible point of  $\varOmega _{k}\left[ n \right]$ at the $m$-th iteration. Thus, the SDR form of problem (P3) is given by
\begin{subequations}
\begin{align}
&{\textbf{(P3.1)}}~\mathop{\text {max}}\limits_{ \{\mathbf{W}_k[n],\widetilde{\rho }_k\left[ n \right] ,{\varOmega _k[n]}\}_{\forall k,n},\eta}~~\eta\\
&\text{s.t.}~~
(\ref{SDR_L}), (\ref{SDR_pu}), (\ref{SDR,slack}),(\ref{SDR,e}),(\ref{SDR,semidefinite}),\\
&\sum_{n=1}^N{t_{\text{\text{d}}}\zeta _k\big(\xi _1\left( \varOmega _{k}\left[ n \right] \right)-e^{\widetilde{\rho }_k\left[ n \right]  +\varOmega _k\left[ n \right]} \big) -E^{\text{total}}_k\left[ n \right]}\ge \eta,\forall k,
\end{align}

It can be noted that problem {(P3.1)} is a standard convex problem that can be solved by CVX. Additionally, ${\rho }_k\left[ n \right]$ can be obtained by the solution to problem (P3.1) according to $e^{\widetilde{\rho }_k\left[ n \right]} = {\rho }_k\left[ n \right]$. However, the solution to (P3.1) may conflict with constraint (\ref{SDR,rank}). Fortunately, we will provide a method to construct a solution satisfying constraint (\ref{SDR,rank}) based on the solution of (P3.1) in the following Theorem \ref{th:1}.
\end{subequations}
\begin{theorem}\label{th:1}
Suppose that the optimal feasible solution of problem {(P3.1)} are $\mathbf{W}^*_k[n]$ ,$\widetilde{\rho }_{k}^{*}\left[ n \right]
$ and ${\varOmega^*_k[n]}$. There exists $\mathbf{W}^{\star}_k[n]$ satisfying $Rank(\mathbf{W}^{\star}_k[n]) = 1$ and other variables $\widetilde{\rho }_{k}^{*}\left[ n \right] $ and ${\varOmega^*_k[n]}$ are still feasible solutions to the problem {(P3.1)}, and the corresponding $\mathbf{W}^*_k[n]$ is given by
\begin{align}
    \mathbf{W}_{{k}}^{\star}\left[ n \right] =\frac{\mathbf{W}_{{k}}^{*}\left[ n \right] {\mathbf{h}}_{{k,\text{u}}}\left[ n \right] {\mathbf{h}}_{{k,\text{u}}}^{{H}}\left[ n \right] \mathbf{W}_{{k}}^{*}\left[ n \right]}{{\mathbf{h}}_{{k,\text{u}}}^{{H}}\left[ n \right] \mathbf{W}_{{k}}^{*}\left[ n \right] {\mathbf{h}}_{{k,\text{u}}}\left[ n \right]}.  \label{proof1}
\end{align}
\end{theorem}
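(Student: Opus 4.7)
Set $\mathbf{a}_k[n] := \mathbf{W}_k^*[n]\mathbf{h}_{k,\text{u}}[n]$ and $c_k[n] := \mathbf{h}_{k,\text{u}}^H[n]\mathbf{W}_k^*[n]\mathbf{h}_{k,\text{u}}[n]$, so that the construction (\ref{proof1}) reads $\mathbf{W}_k^\star[n] = \mathbf{a}_k[n]\mathbf{a}_k^H[n]/c_k[n]$. This form is manifestly positive semidefinite and of rank at most one, which immediately settles (\ref{SDR,semidefinite}) and (\ref{SDR,rank}); in the degenerate case $c_k[n]=0$, the PSD property of $\mathbf{W}_k^*$ forces $\mathbf{a}_k[n]=\mathbf{0}$ and one simply takes $\mathbf{W}_k^\star[n]=\mathbf{0}$. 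The plan is then to verify that substituting every $\mathbf{W}_k^*[n]$ by $\mathbf{W}_k^\star[n]$, with $\widetilde{\rho}_k^*[n]$ and $\Omega_k^*[n]$ kept frozen, preserves feasibility of the remaining constraints of (P3.1) and leaves the optimum $\eta$ unchanged.

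I would rest the entire verification on three elementary facts. \emph{Signal preservation}: a direct cyclic-trace manipulation gives $\text{tr}(\mathbf{H}_{k,\text{u}}[n]\mathbf{W}_k^\star[n]) = c_k[n] = \text{tr}(\mathbf{H}_{k,\text{u}}[n]\mathbf{W}_k^*[n])$. \emph{Interference shrinkage}: factoring $\mathbf{W}_k^*[n]=\mathbf{B}\mathbf{B}^H$ and applying Cauchy--Schwarz to $\mathbf{B}^H\mathbf{h}_{l,\text{u}}[n]$ and $\mathbf{B}^H\mathbf{h}_{k,\text{u}}[n]$ yields $|\mathbf{h}_{l,\text{u}}^H\mathbf{W}_k^*\mathbf{h}_{k,\text{u}}|^2 \leq (\mathbf{h}_{l,\text{u}}^H\mathbf{W}_k^*\mathbf{h}_{l,\text{u}})\,c_k[n]$, which rearranges to $\text{tr}(\mathbf{H}_{l,\text{u}}[n]\mathbf{W}_k^\star[n]) \leq \text{tr}(\mathbf{H}_{l,\text{u}}[n]\mathbf{W}_k^*[n])$ for every $l\ne k$. \emph{Power dominance}: using $\mathbf{W}_k^*=\sum_i\lambda_i\mathbf{u}_i\mathbf{u}_i^H$, the quantity $\text{tr}(\mathbf{W}_k^\star[n])=\|\mathbf{W}_k^*[n]\mathbf{h}_{k,\text{u}}[n]\|^2/c_k[n]$ becomes a convex combination of the eigenvalues $\lambda_i$, so $\text{tr}(\mathbf{W}_k^\star[n]) \leq \sum_i\lambda_i = \text{tr}(\mathbf{W}_k^*[n])$.

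With these in hand, the SINR/downlink-decoding constraint (\ref{SDR_L}) for user $k$ holds because its left-hand side is frozen by signal preservation while its right-hand side can only shrink under interference shrinkage applied to each $j\neq k$ term (the $\widetilde{\rho}_k^*$-dependent factor is untouched); the transmit-power constraint (\ref{SDR_pu}) follows by summing power dominance over $k$; the box constraint (\ref{SDR,e}) is automatic since $\widetilde{\rho}_k^*$ is frozen; and the energy-harvesting lower bound \eqref{SDR,nonconvex} depends on $\mathbf{W}$ only through the frozen $\Omega_k^*$, so it is preserved exactly.

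The genuinely delicate step---and the one I would flag as the main obstacle---is the slack constraint (\ref{SDR,slack}), $e^{\Omega_k^*[n]} \leq \sum_j \text{tr}(\mathbf{H}_{k,\text{u}}[n]\mathbf{W}_j^\star[n]) + \sigma_k^2$, because here interference shrinkage works against us by shrinking the right-hand side. The intended resolution is to note that the self-contribution $\text{tr}(\mathbf{H}_{k,\text{u}}\mathbf{W}_k^\star)=c_k[n]$ coincides exactly with its original value and already furnishes the dominant term that pinned $\Omega_k^*[n]$ at the optimum of (P3.1); combined with the nonnegativity of the remaining cross terms and the optimality-induced coupling between (\ref{SDR,slack}) and (\ref{SDR,nonconvex}), this keeps $\Omega_k^*[n]$ admissible. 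The triple $(\{\mathbf{W}_k^\star[n]\},\{\widetilde{\rho}_k^*[n]\},\{\Omega_k^*[n]\})$ is then a rank-one feasible point of (P3.1) attaining the same objective $\eta$, completing the plan.
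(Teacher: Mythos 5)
Your verification of \eqref{SDR_L}, \eqref{SDR_pu}, \eqref{SDR,semidefinite} and \eqref{SDR,rank} is correct and in fact more complete than the paper's own proof, which only records the two identities $\mathbf{h}_{k,\text{u}}^{H}\mathbf{W}_{k}^{\star}\mathbf{h}_{k,\text{u}}=\mathbf{h}_{k,\text{u}}^{H}\mathbf{W}_{k}^{*}\mathbf{h}_{k,\text{u}}$ and $\mathrm{tr}(\mathbf{W}_{k}^{\star})=\mathrm{tr}(\mathbf{W}_{k}^{*})$ and stops there (the second identity is actually only an inequality $\mathrm{tr}(\mathbf{W}_{k}^{\star})\le\mathrm{tr}(\mathbf{W}_{k}^{*})$ in general, exactly as your eigenvalue argument shows; equality requires $\mathbf{W}_k^*$ to already be rank one on the relevant subspace). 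Your ``signal preservation / interference shrinkage / power dominance'' trio is the standard and correct way to handle the decoding and power constraints.

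The genuine gap is exactly where you flagged it, and your proposed resolution does not close it. Constraint \eqref{SDR,slack} reads $e^{\varOmega_k^*[n]}\le\sum_{j}\mathrm{tr}(\mathbf{H}_{k,\text{u}}[n]\mathbf{W}_j[n])+\sigma_k^2$; under the substitution the $j=k$ term is preserved but each cross term $\mathrm{tr}(\mathbf{H}_{k,\text{u}}[n]\mathbf{W}_j^{\star}[n])$, $j\ne k$, can strictly decrease. Since $e^{\varOmega_k[n]}\big(1-e^{\widetilde{\rho}_k[n]}\big)$ is nondecreasing in $\varOmega_k[n]$, at an optimum of (P3.1) the variable $\varOmega_k^*[n]$ is generically pushed up until \eqref{SDR,slack} is \emph{tight}, so any strict shrinkage of the cross terms immediately violates it; there is no slack for the ``dominant self-term'' to absorb, and no ``optimality-induced coupling'' rescues this. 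The assertion that the self-contribution ``already furnishes the dominant term that pinned $\varOmega_k^*[n]$'' is unsupported: the cross terms can be arbitrarily large relative to $c_k[n]$ (e.g., when $\mathbf{h}_{k,\text{u}}$ is nearly aligned with some $\mathbf{w}_j$, $j\ne k$, which is precisely the regime SWIPT exploits for energy harvesting). The classical repair in the SWIPT SDR literature is to dump the residual $\sum_j(\mathbf{W}_j^*-\mathbf{W}_j^{\star})\succeq 0$ into a dedicated energy-only covariance so that the \emph{total} transmit covariance, and hence every harvested-energy term, is unchanged; problem (P3.1) has no such energy beam, so that device is unavailable here. As written, neither your argument nor the paper's establishes feasibility of \eqref{SDR,slack} after the reconstruction; a correct proof would need either to augment the model with an energy beam, to show rank-one optimality directly from the KKT conditions of (P3.1), or to re-optimize $\varOmega_k[n]$ and show the objective is not degraded.
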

\begin{proof}
According to ($\ref{proof1}$), ${\mathbf{h}}_{{k,\text{u}}}^{{H}}\left[ n \right] \mathbf{W}_{{k}}^{\star}\left[ n \right] {\mathbf{h}}_{{k,\text{u}}}\left[ n \right] ={\mathbf{h}}_{{k,\text{u}}}^{{H}}\left[ n \right] \mathbf{W}_{{k}}^{*}\left[ n \right] {\mathbf{h}}_{{k,\text{u}}}\left[ n \right]$, and $\text{tr}(\mathbf{W}_{{k}}^{\star}\left[ n \right])=\text{tr}(\mathbf{W}_{{k}}^{*}\left[ n \right])$ always hold, which indicates that $\mathbf{W}_{{k}}^{\star}[n]$, $\rho^*_k[n]$, and ${\varOmega^*_k[n]}$ still are  optimal solutions to (P4). The proof has been completed.
\end{proof}
\subsubsection{Subproblem for Optimizing the UAV Trajectory Design Set $\mathbf{\Psi}_4$}
 For any given variable sets $\mathbf{\Psi}_1$, $\mathbf{\Psi}'_2$ and $\mathbf{\Psi}_3$, the subproblem to solve $\mathbf{\Psi}_4$ can be expressed as follows:
\begin{subequations}
\begin{align}
\textbf{(P4)} &~\mathop{\text{max}}\limits_{\mathbf{\Psi}_4} ~~~\eta \\
\text{s.t.}~\nonumber &\sum_{n=1}^N t_{\text{\text{d}}}(1-\rho_k[n]) \zeta_k \bigg(\sum_{j=1}^K \frac{|\overline{\mathbf{h}}_{k,\text{u}}^{H}[n] \mathbf{w}_j[n]|^2}{d_{k,\text{u}}^{2}[n]} + \sigma_{k}^{2}\bigg)\\
&~~~~~~~~~~ -E_k^{\text{\text{total}}}[n] \ge \eta, ~\forall k,\label{p4,nonconvex} \\
&d_{k,\text{u}}^{2}\left[ n \right] \le \frac{E_k\left[ n \right] \left|\mathbf{v}^{\text{H}}_{k}\left[ n \right] \overline{\mathbf{h}}_{k,\text{u}}\left[ n \right]   \right|^2}
{t_{\text{o}}\left[ n \right] \sigma _{k}^{2}\big( 2^{\frac{L_{\text{o},k}\left[ n \right]}{t_{\text{o}}\left[ n \right] B}}-1 \big) },~\forall k, \forall n,\\
\nonumber
&d_{k,\text{u}}^{2}\left[ n \right] \le \frac{-\rho _k\left[ n \right] | \overline{\mathbf{h}}_{k,\text{u}}^{H}\left[ n \right] \mathbf{w}_k\left[ n \right] |^2}
{ \big(\rho _k\left[ n \right] \sigma _{k}^{2}+\delta _{k}^{2} \big)} + \\
&~~\frac{\rho _k\left[ n \right] | \overline{\mathbf{h}}_{k,\text{u}}^{H}\left[ n \right] \mathbf{w}_k\left[ n \right] |^2}
{\big( 2^{\frac{\theta L_{\text{o},k}\left[ n \right]}{t_{\text{\text{d}}}B}}-1 \big) \left( \rho _k\left[ n \right] \sigma _{k}^{2}+\delta _{k}^{2} \right) },
~\forall k, \forall n,\\
&d_{\text{u,b}}^{2}\left[ n \right] \le E_{i}^{\text{UAV}}\left[ n \right] \lambda _i/\sigma ^2t_{\text{u}}\left[ n \right] \big( 2^{\frac{L_{\text{o,u}}^{i}\left[ n \right]}{t_{\text{u}}\left[ n \right] B}}-1 \big),~\forall n,{\forall}i.
\end{align}
\end{subequations}

The non-convexity of problem (P4) arises from the constraint (\ref{p4,nonconvex}). We will further  employ the fractional programming (FP) theory\cite{shen2018fractional} to solve it. Thus, constraint (\ref{p4,nonconvex}) can be transformed into the following form:
\begin{align}
    \nonumber
    &\hspace{-2mm}\sum_{n=1}^N {t_{\text{\text{d}}}}\left( 1-\rho _k\left[ n \right] \right) \zeta _k\Big( \sum_{j=1}^K
    | \overline{\mathbf{h}}_{k,\text{u}}^{H}\left[ n \right] \mathbf{w}_k\left[ n \right] |^2 \varLambda _k\left[ n \right] + \sigma _{k}^{2} \Big)\\
    &~~~~- E^{\text{\text{total}}}_k[n] \le \eta,~{\forall}k, \label{P4,convex}
\end{align}
where $\varLambda _k\left[ n \right] = 2y_k\left[ n \right] -y_{k}^{2}\left[ n \right] d_{\text{u},k}^{2}\left[ n \right]$ with $y_{k}\left[ n \right]$ being an auxiliary variable. Given the trajectory of the UAV, $\mathbf{q}^{(m)}\left[ n \right]$, at the $m$-th iteration, the optimal $y_{k}\left[ n \right]$ can be updated by $y_{k}\left[ n \right]={1}/{{\lVert \left. \mathbf{q}^{(m)}\left[ n \right] -\mathbf{s}_k \rVert ^2+H^2\right.}}\label{update}$.

It can be noted that problem (P4) {with the constraint \eqref{P4,convex}} is a convex optimization problem now. Therefore, problem (P4) can be solved by utilizing the solvers, e.g., CVX.
\begin{figure*}[htbp]
	\centering
	\begin{minipage}{0.328\linewidth}
		\centering
		\includegraphics[width=1\linewidth]{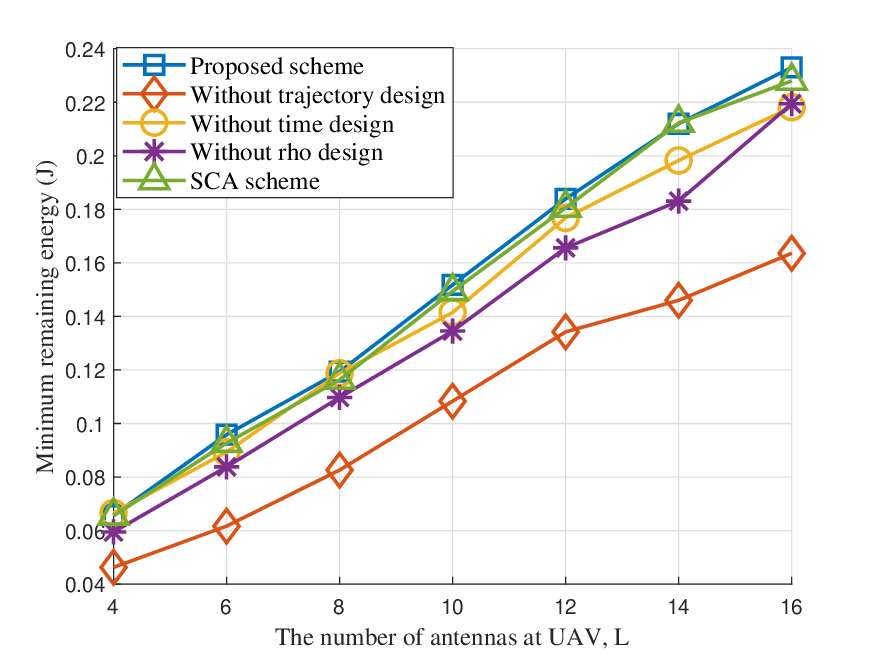}
		\caption{Remaining energy versus the number of antennas at UAV with {${P_{\mathrm{UAV}}^{\max}}$ = 50W}, $H$ = 5m, and, $\Gamma$ = 1Mb.}
		\label{P2}
	\end{minipage}
	\begin{minipage}{0.328\linewidth}
		\centering
		\includegraphics[width=1\linewidth]{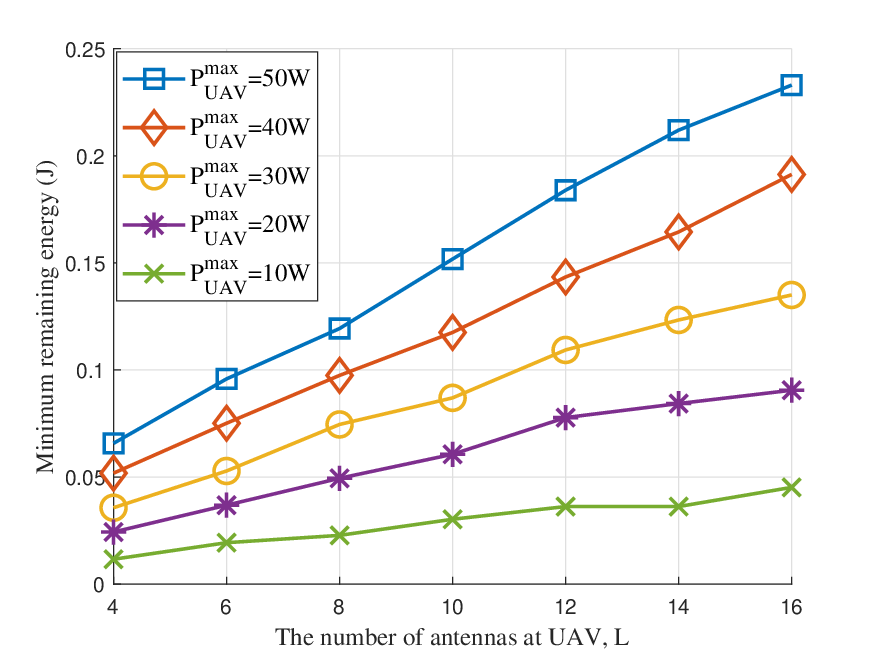}
		\caption{Remaining energy versus transmit power with $H$ = 5m, and, $\Gamma$ = 1Mb.}
		\label{P3}
	\end{minipage}
 	\begin{minipage}{0.328\linewidth}
		\centering
		\includegraphics[width=1\linewidth]{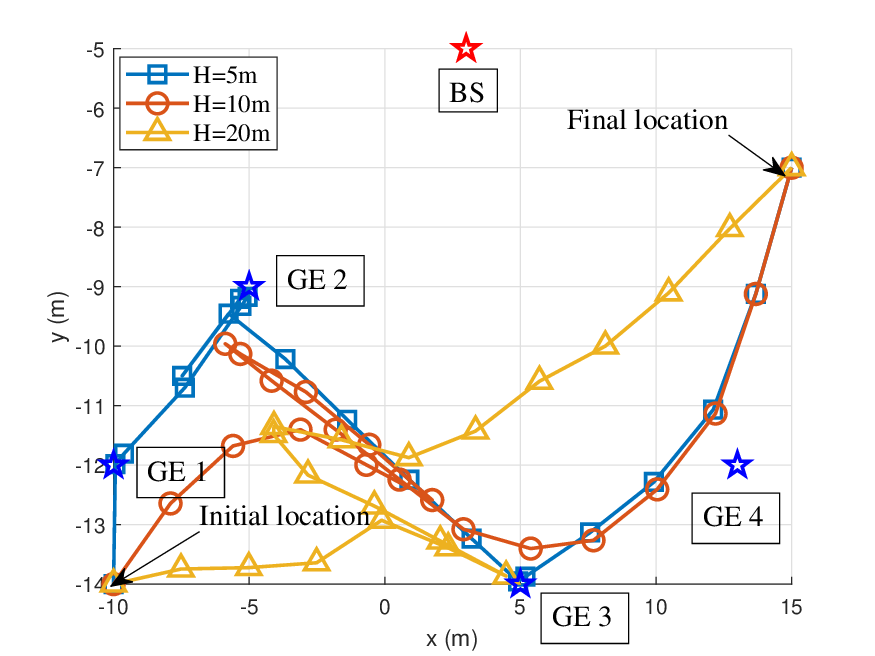}
		\caption{UAV trajectory versus the UAV altitude with $\Gamma$ = 1Mb, $L$ = 8, and, {${P_{\mathrm{UAV}}^{\max}}$ = 50W}.}
		\label{P4}
	\end{minipage}
\end{figure*}
\section{Simulation Results}\label{sec:simulation}
In this section, we simulate the case of $K$= 4 GEs with the coordinates of  (-10, -12), (-5, -9), (5, -14), (13, -12) respectively. Besides, the other simulation parameters are set as $C_k =1000$, $\beta_0$ = - 20 dB, $\sigma^2_k$= - 60 dBm, $\sigma^2$ = - 60 dBm, $\delta^2_k$ = - 50 dBm, $B$ = 10 MHz, $\zeta$ = 10 dB, $\varsigma _k$ = $10^{-28}$, $\theta$ = $10^{-5}$, $F^{\text{max}}_k$ = 2 GHz, $ P_k^{\rm{max}}$ = 1 W, $M_{\text{x}}$ = 4, $M_{\text{y}}$ = 4, $\delta$ = 0.5s, $T$= 10s, $t_{\text{\text{d}}}$ = 0.5$\delta$, $\zeta _k$ = 0.8, $\mathbf{q}_{\text{I}}$= (-10, -14), $\mathbf{q}_{\text{F}}$= (15, -7), $\mathbf{s}_{\text{b}}$ = (3, -5) and $V_{\text{max}}$ = 5m/s.

In Fig. \ref{P2}, the performances of different schemes {versus} the varying numbers of UAV antennas are presented. The scheme without trajectory design refers to fixing the UAV's trajectory as the initial trajectory, while the scheme without time design refers to setting $t_{\text{o}}$ and $t_{\text{u}}$ as 0.25$\delta$. The scheme without rho design refers to setting $\rho_k$ as 0.1, while the SCA scheme refers to optimizing the trajectory using the Successive Convex Approximation (SCA) method, representing a lower bound of the original problem. The performance of all schemes improves as the number of UAV antennas increases, as more antennas provide greater flexibility for beamforming. The proposed scheme is superior to other schemes, demonstrating its effectiveness. The design without trajectory design scheme exhibits inferior performance compared to our proposed scheme, suggesting that modifying the path loss coefficient in the channel through UAV trajectory design can significantly enhance the overall system performance. Furthermore, the scheme without rho design also exhibits a significant performance gap compared to our proposed scheme, which highlights the critical importance of designing the value of $\rho_k$ based on communication requirements in SWIPT networks.

We present the effects of transmit power on performance in Fig. \ref{P3} w.r.t. the number of UAV antennas. At low power levels, the system performance does not significantly improve with the increasing of antennas. However, as the power level increases, the system performance improves more significantly with the increasing number of UAV antennas. Especially when the power is 50W, the performance of the 16-antenna system improves by 254\% compared to the 4-antenna system.

In Fig. \ref{P4}, we compare the UAV trajectory at different altitudes. At an altitude of 5m, the UAV travels to each GE in sequence before flying to the final location. However, at altitudes of 10m or 20m, the UAV's trajectory tends to follow a more central route among GEs. As the altitude increases, the relative difference of distances between the UAV and GEs become smaller, making a more central trajectory more conducive to system performance.

\section{CONCLUSION}
In this paper, we propose a UAV-assisted MEC-SWIPT scheme, which enables the UAV to simultaneously transmit energy and computing results to GEs through the SWIPT technology. Then, we design an alternating optimization algorithm to maximize the minimum remaining energy among all GEs. Simulation results show that the system performance can be significantly enhanced by designing UAV trajectories and GEs' PS ratio for information decoding. The effect of the number of UAV antennas on system performance is also being examined. Additionally, the effectiveness of the proposed scheme is validated by comparing it with the baseline schemes.
\bibliographystyle{IEEEtran}
\bibliography{ref}

\end{document}